\documentclass{llncs}

\usepackage[english]{babel}
\usepackage[utf8]{inputenc}
\pagestyle{plain}
\usepackage{amssymb}
\usepackage{amsmath}
\usepackage{color}
\usepackage{graphicx}
\newcommand{\NP}{{\sf NP}}

\newcommand{\W}{{\sf W}}

\pagestyle{plain}
 
\begin{document}
\title{
Obtaining Planarity by Contracting Few Edges\thanks{This paper is supported by EPSRC (EP/G043434/1) and Royal Society (JP100692), and by the Research Council of Norway (197548/F20).}}

\author{ 
Petr A. Golovach \inst{1} 
\and  
Pim van~'t Hof \inst{2}
\and
Dani\"el Paulusma \inst{1}
}

\institute{School of Engineering and Computing Sciences, Durham University, UK\\
\texttt{\{petr.golovach,daniel.paulusma\}@durham.ac.uk}
\and
Department of Informatics,
University of Bergen, Norway\\
\texttt{\{pim.vanthof\}@ii.uib.no}}

\maketitle

\begin{abstract}
The {\sc Planar Contraction} problem is to test whether a given graph 
can be made planar by using at most $k$ edge contractions. This problem is known to be \NP-complete. We show that it is fixed-parameter tractable when parameterized by $k$. 
\end{abstract}

\section{Introduction}

Numerous problems in algorithmic graph theory, with a large variety of applications in different fields, can be formulated as graph modification problems. A graph modification problem takes as input an $n$-vertex graph $G$ and an integer $k$, and the question is whether $G$ can be modified into a graph that belongs to a prescribed graph class, using at most $k$ operations of a certain specified type. Some of the most common graph operations that are used in this setting are vertex deletions, edge deletions and edge additions, leading to famous problems such as {\sc Feedback Vertex Set}, {\sc Odd Cycle Transversal}, {\sc Minimum Fill-In} and {\sc Cluster Editing}, to name but a few. More recently, the study of graph modification problems allowing only {\em edge contractions} has been initiated, yielding several results that we will survey below. The contraction of an edge removes both end-vertices of an edge and replaces them by a new vertex, which is made adjacent to precisely those vertices that were adjacent to at least one of the two end-vertices. Choosing edge contraction as the only permitted operation leads to the following decision problem, for each graph class ${\cal H}$.

\medskip
\noindent
{\sc ${\cal H}$-Contraction}\\
{\it Instance:} \hspace*{-.01cm} A graph $G$ and an integer $k$.\\
{\it Question:} Does there exist a graph $H\in {\cal H}$ such that $G$ can be contracted to $H$,\\
\hspace*{1.51cm} using at most $k$ edge contractions?

\medskip
Heggernes et al.~\cite{HHLLP11} presented a $2^{k+o(k)}+n^{O(1)}$ time algorithm for ${\cal H}$-{\sc Con\-traction} when ${\cal H}$ is the class of paths. Moreover, they showed that in this case the problem has a linear vertex kernel. When ${\cal H}$ is the class of trees, they showed that the problem can be solved in $4.98^k n^{O(1)}$ time, and that a polynomial kernel does not exist unless 
\NP\ $\subseteq$ co\NP/poly.
When the input graph is a chordal graph with $n$ vertices and $m$ edges, then ${\cal H}$-{\sc Contraction} can be solved in $O(n+m)$ time when ${\cal H}$ is the class of trees and in $O(nm)$ time when ${\cal H}$ is the class of paths~\cite{HHLeP11}. Heggernes et al.~\cite{HHLP11} proved that ${\cal H}$-{\sc Contraction} is fixed-parameter tractable when ${\cal H}$ is the class of bipartite graphs and $k$ is the parameter. This also follows from a more general result from a recent paper by Marx, O'Sullivan, and Razgon~\cite{MOR11} on generalized bipartization. Golovach et al.~\cite{GKPT11} considered the class of graphs of minimum degree at least $d$ for some integer $d$. They showed that in this case ${\cal H}$-{\sc Contraction} is fixed-parameter tractable when both $d$ and $k$ are parameters, \W[1]-hard when $k$ is the parameter, and para-\NP-complete when $d$ is the parameter.

The combination of planar graphs and edge contractions has been studied before in a closely related setting. Kami\'{n}ski, Paulusma and Thilikos~\cite{KPT10} showed that for every fixed graph $H$, there exists a polynomial-time algorithm for deciding whether a given planar graph can be contracted to $H$. Very recently, this result was improved by Kami\'{n}ski and Thilikos~\cite{KT12}. They showed that, given a graph $H$ and a planar graph $G$, the problem of deciding whether $G$ can be contracted to $H$ is fixed-parameter tractable when parameterized by $|V(H)|$. 

\medskip
\noindent
{\bf Our Contribution.}  
We study ${\cal H}$-{\sc Contraction} when ${\cal H}$ is the class of planar graphs, and refer to the problem as {\sc Planar Contraction}. 
This problem is known to be \NP-complete due to a more general result on ${\cal H}$-{\sc Contraction} by Asano and Hirata~\cite{AH83}. We show that the {\sc Planar Contraction} problem is fixed-parameter tractable when parameterized by $k$. 
This result complements the following results on two other graph modification problems related to planar graphs. The problem of deciding whether a given graph can be made planar by using at most $k$ vertex deletions was proved to be fixed-parameter tractable independently by Marx and Schlotter~\cite{MS12}, who presented a quadratic-time algorithm for every fixed $k$, and by Kawarabayashi~\cite{K09}, whose algorithm runs in linear time for every $k$. Kawarabayashi and Reed~\cite{KR07} showed that deciding whether a graph can be made planar by using at most $k$ edge deletions can also be done in linear time for every fixed $k$.

Our algorithm for {\sc Planar Contraction} starts by finding a set $S$ of at most $k$ vertices whose deletion transforms $G$ into a planar graph. Such a set can be found by using either the above-mentioned linear-time algorithm by Kawarabayashi~\cite{K09} or the quadratic-time algorithm by Marx and Schlotter~\cite{MS12}. The next step of our algorithm is based on the irrelevant vertex technique developed in the graph minors project of Robertson and Seymour~\cite{RS-GMXIII,RS-GMXXII}. 
We show that if the input graph $G$ has large treewidth, we can find an {\em edge} whose contraction yields an equivalent, but smaller instance. After repeatedly contracting such irrelevant edges, we invoke Courcelle's Theorem~\cite{Co90} to solve the remaining instance in linear time.

We finish this section by making two remarks that show that we cannot apply the techniques that were used to prove fixed-parameter tractability of the vertex deletion and edge deletion variants of {\sc Planar Contraction}. First, a crucial observation in the paper of Kawarabayashi and Reed~\cite{KR07} is that any graph that can be made planar by at most $k$ edge deletions must have bounded genus. This 
property is heavily exploited in the case where the treewidth of the input graph is large. The following example shows that we cannot use this technique in our setting. Take a complete biclique $K_{3,r}$ with partition classes $A$ and $B$, where $|A|=3$ and $|B|=r$ for some integer $r\geq 3$. Now make the vertices in $A$ pairwise adjacent and call the resulting graph $G_r$. 
Then $G_r$ can be made modified into a planar graph by contracting one of the edges in $A$. However, the genus of $G_r$ is at least the genus of $K_{3,r}$, which is equal to $\frac{r-2}{2}$~\cite{Bo78}.

Second, the problem of deciding whether a graph can be made planar by at most $k$ vertex deletions for some fixed integer $k$, i.e., $k$ is not part of the input, is called the $k$-{\sc Apex} problem. As observed by Kawarabayashi~\cite{K09} and Marx and Schlotter~\cite{MS12}, the class of so-called $k$-apex graphs (graphs that can be made planar by at most $k$ vertex deletions) is closed under taking minors. This means that the $k$-{\sc Apex} problem can be solved in cubic time for any fixed integer $k$ due to deep results by Robertson and Seymour~\cite{RS04}. However, we cannot apply Robertson and Seymour's result on {\sc Planar Contraction}, because the class of graphs that can be made planar by at most $k$ edge contractions is not closed under taking minors, as the following example shows. Take the complete graph $K_5$ on 5 vertices. For each edge $e=uv$, add a path $P_e$ from $u$ to $v$ consisting of $p$ new vertices for some integer $p\geq k$. 
Call the resulting graph $G^*_p$. Then $G^*_p$ can be made planar by contracting an arbitrary edge of the original $K_5$. However, if we remove all edges of this $K_5$, we obtain a minor of $G^*_p$ that is a subdivision of the graph $K_5$. In order to make this minor planar, we must contract all edges of a path $P_e$, so we need at least $p+1>k$ edge contractions.

\section{Preliminaries}\label{s-pre}

Throughout the paper we consider undirected finite graphs that have no loops and no multiple edges. Whenever we consider a graph problem, we use $n$ to denote the number of vertices of the input graph. 
We refer to the text book of Diestel~\cite{Diestel}  for undefined graph terminology and to the monographs of Downey and Fellows~\cite{DF99} 
for more information on parameterized complexity.

Let $G=(V,E)$ be a graph and let $S$ be a subset of $V$. We write $G[S]$ to denote the subgraph of $G$ {\it induced} by $S$, i.e.,
the subgraph of $G$ with vertex set $S$ and edge set $\{uv\; |\; u,v\in S\; \mbox{with}\; uv\in E\}$. We write $G-S=G[V\setminus S]$, and for any subgraph $H$ of $G$, we write $G-H$ to denote $G-V(H)$. We say that two disjoint subsets $U\subseteq V$ and $W\subseteq V$ are {\it adjacent} if there exist two vertices $u\in U$ and $w\in W$ such that $uw\in E$. Let $H$ be a graph that is not necessarily vertex-disjoint from $G$. Then $G\cup H$ denotes the graph with vertex set $V(G)\cup V(H)$ and edge set $E(G)\cup E(H)$, and $G\cap H$ denotes the graph with vertex set $V(G)\cap V(H)$ and edge set $E(G)\cap E(H)$.

The \emph{contraction} of edge $uv$ in $G$ removes $u$ and $v$ from $G$, and replaces them by a new vertex made adjacent to precisely those vertices that were adjacent to $u$ or $v$ in $G$. A graph $H$ is a {\em contraction} of $G$ if $H$ can be obtained from $G$ by a sequence of edge contractions. 
Alternatively, 
we can define a contraction of $G$ as follows. An {\it $H$-witness structure} $\cal W$ is a partition of $V(G)$ into $|V(H)|$ nonempty sets $W(x)$, one for each $x \in V(H)$, called {\it $H$-witness sets}, such that each $W(x)$ induces a connected subgraph of $G$, and for all $x,y\in V(H)$ with $x\neq y$, the sets $W(x)$ and $W(y)$ are adjacent in $G$ if and only if $x$ and $y$ are adjacent in $H$. Clearly, $H$ is a contraction of $G$ if and only if $G$ has an $H$-witness structure; $H$ can be obtained by contracting each witness set into a single vertex. A {\em witness edge} is an edge of $G$ whose end-vertices belong to two different witness sets. 

Let ${\cal W}$ be an $H$-witness structure of $G$. For our purposes, we sometimes have to contract edges in $G$ such that the resulting graph does {\it not} contain $H$ as contraction. In order to do this it is necessary to {\it destroy} ${\cal W}$, i.e., to contract at least one witness edge in ${\cal W}$. After all, every edge that is not a witness edge has both its end-vertices in the same witness set of ${\cal W}$, which means that contracting such an edge yields an $H$-witness structure of a contraction of $G$. Hence, contracting all such non-witness edges transforms $G$ into $H$ itself. Note that if we destroy ${\cal W}$ by contracting a witness edge $e$ in ${\cal W}$, the obtained graph still has $H$ as a contraction if $e$ was a non-witness edge in some other $H$-witness structure of $G$. 
Hence, in order to obtain a graph that does not have $H$ as a contraction, it is necessary and sufficient to destroy {\em all} $H$-witness structures of $G$.

A \emph{planar graph} $G$ is a graph that can be embedded in the plane, i.e., that can be drawn in the plane so that its edges intersect only at their end-vertices. A graph that is actually drawn in such a way is called a \emph{plane graph}, or an {\em embedding} of the corresponding planar graph. A plane graph $G$ partitions the rest of the plane into a number of connected regions, called the \emph{faces} of $G$. Each plane graph has exactly one unbounded face, called the \emph{outer} face; all other faces are called \emph{inner faces}. Let $C$ be a cycle in a plane graph $G$. Then $C$ divides the plane in exactly two regions: the {\em outer} region of $C$, containing the outer face of $G$, and the \emph{inner} region of $C$. We say that a vertex $u$ of $G$ lies \emph{inside} $C$ if $u$ is in the inner region of $C$. Similarly, $u$ lies \emph{outside} $C$ if $u$ is in the outer region of $C$. The {\it interior} of $C$ with respect to $G$, denoted $\mbox{interior}_G(C)$, is the set of all vertices of $G$ that lie inside $C$. We also call these vertices {\it interior} vertices of $C$. We say that $C$ {\it separates} the vertices that lie inside $C$ from the vertices that lie outside $C$.
A sequence of mutually vertex-disjoint cycles $C_1, \ldots, C_{q}$ in a plane graph 
is called \emph{nested} if there exist disks $\Delta_1, \ldots, \Delta_{q}$ such that $C_i$ is the boundary of $\Delta_i$ for $i = 1, \ldots, q$, and $\Delta_{i+1} \subset \Delta_{i}$ for $i = 1, \ldots, q-1$. We also refer to such a sequence of nested cycles as {\it layers}. We say that a vertex $u$ lies {\it between} two nested cycles $C_i$ and $C_j$ with $i<j$ if $u$ lies in the inner 
region of $C_i$ and in the outer region of $C_j$.

A graph $G$ contains a graph $H$ as a {\em minor} if $G$ can be modified to $H$ by a sequence of edge contractions, edge deletions and vertex deletions. Note that a graph $G$ contains a graph $H$ as a minor if and only if $G$ contains a subgraph that contains $H$ as a contraction.
The \emph{subdivision} of an edge $e=uv$ in a graph $G$ removes $e$ from $G$ and replaces it by a new vertex $w$ that is made adjacent to $u$ and $v$. A {\it subdivision} of a graph $G$ is a graph obtained from $G$ after performing a sequence of edge subdivisions. In Figure~\ref{f-wall}, three examples of an \emph{elementary wall} are given. The unique 
cycle that forms the boundary of the outer face is called the {\em perimeter} of the wall. A \emph{wall} $W$ of height $h$ is a subdivision of an elementary wall of height $h$ and is well-known to have a unique planar embedding. 
We also call the facial cycle of $W$ corresponding to the perimeter of the original elementary wall the {\em perimeter} of $W$, and we denote this cycle by $P(W)$.

\begin{figure}
\begin{center}
\includegraphics[scale=0.6]{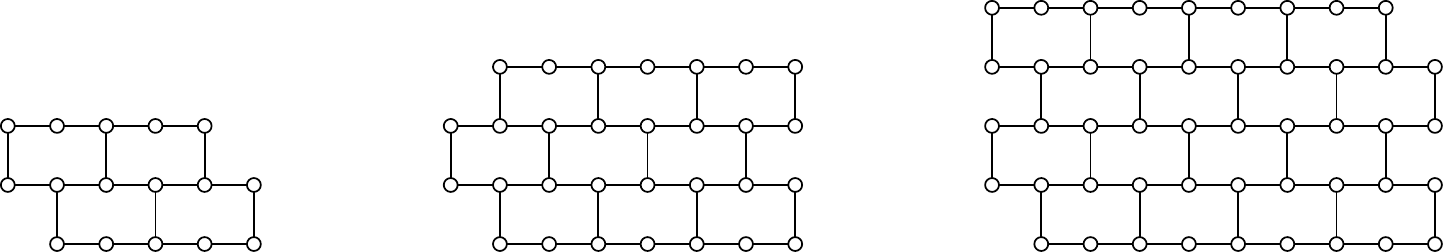} 
\caption{Elementary walls of height 2, 3, and 4 with perimeters of length 14, 22,  and 30, respectively.}\label{f-wall}
\end{center}
\end{figure}

The $r \times r$ {\it grid} has all pairs $(i, j)$ for $i,j = 0,1,\ldots, r-1$ as the vertex set, and two vertices $(i,j)$ and $(i^\prime,j^\prime)$ are joined by an edge if and only if $|i-i^\prime| + |j-j^\prime| = 1$. The \emph{side length} of an $r \times r$ grid is $r$.
A well-known result of Robertson and Seymour~\cite{RS94} states that,
for every integer $r$, any planar graph with no $r\times r$ grid minor has treewidth at most $6r-5$. Although it is not known whether a largest grid minor of a planar graph can be computed in polynomial time, there exist several constant-factor approximation algorithms. 
In our algorithm we will use one by Gu and Tamaki~\cite{GuT11}. 
For any graph $G$, let gm$(G)$ be the largest integer $r$ such that $G$ has an $r \times r$ grid as a minor. 
Gu and Tamaki~\cite{GuT11} showed that for every constant $\epsilon > 0$, there exists a constant $c_{\epsilon}>3$ such that an $r\times r$ grid minor in a planar graph $G$ can be constructed in time 
$O(n^{1 + \epsilon})$, where $r\geq \mbox{gm}(G)/c_\epsilon$.
Because we can obtain a wall of height $\lfloor r/2\rfloor$ as a subgraph from an $r\times r$ grid minor by deleting edges and vertices, their result implies the following theorem.

\begin{theorem}[\cite{GuT11}]\label{t-approx} 
Let $G$ be a planar graph, and let $h^*$ be the height of a largest wall that appears as a subgraph in $G$. For every constant $\epsilon > 0$, there exists a constant $c_{\epsilon}>3$ such that a wall in $G$ with height at least $h^*/c_\epsilon$ can be constructed in time $O(n^{1 + \epsilon})$.
\end{theorem}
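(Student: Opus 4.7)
The plan is to reduce directly to the Gu and Tamaki approximation algorithm for grid minors. Writing $\mbox{gm}(G)$ for the largest side length of a grid minor of $G$, the first step is to observe that $\mbox{gm}(G) \geq h^*$. This is immediate, because a wall of height $h$ contains an $h \times h$ grid as a minor: contracting, in each brick, a suitable one of the subdivided horizontal edges along the top (so that all verticals of a row line up) straightens the wall into an $h \times h$ grid of the same side length. Hence any approximation guarantee for grid minors transfers to walls.

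By Gu and Tamaki, for every $\epsilon > 0$ there is a constant $c'_\epsilon > 3$ and an $O(n^{1+\epsilon})$-time algorithm that outputs an explicit $r \times r$ grid minor of $G$ with $r \geq \mbox{gm}(G)/c'_\epsilon \geq h^*/c'_\epsilon$. From this grid minor I would extract a wall of height $\lfloor r/2 \rfloor$ as a subgraph of $G$ in two substeps. First, the abstract $r \times r$ grid itself contains an elementary wall of height $\lfloor r/2 \rfloor$ as a subgraph: delete alternate vertical edges row by row, shifted by one column from one row to the next, to expose the usual brick pattern. Composing with the grid minor, this already shows that $G$ has an elementary wall of height $\lfloor r/2 \rfloor$ as a minor. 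Second, since an elementary wall has maximum degree $3$, I would apply the standard fact that any subcubic graph appearing as a minor of $G$ appears as a topological minor, i.e., its subdivision appears as a subgraph. Concretely, within each of the at most $r^2$ branch sets of the wall minor, the incident branch edges touch the set in at most three attachment vertices; a single BFS inside the set yields a tree (path or Y-shape) spanning these attachments, and the union of these trees with the chosen branch edges is a subdivision of the elementary wall of height $\lfloor r/2 \rfloor$, hence a wall of that height in $G$.

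Setting $c_\epsilon = 2c'_\epsilon > 3$, the wall so obtained has height at least $\lfloor r/2 \rfloor \geq h^*/(2c'_\epsilon) = h^*/c_\epsilon$. The Gu--Tamaki step runs in $O(n^{1+\epsilon})$ time, and the pruning step is linear in the total size of the branch sets, hence in $n$, so the overall running time stays within the required budget.

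The main obstacle I anticipate is the second substep: one must argue that the Steiner-tree extractions in distinct branch sets are independent, so that disjointness of the resulting subdivision paths is inherited from disjointness of the branch sets, and that each such extraction can be done by a single BFS inside its own branch set. Once this local independence is spelled out, combining it with the degree-$3$ bound of the elementary wall makes the cubic topological-minor step routine, and the theorem follows by the chain of inequalities above.
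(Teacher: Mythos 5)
Your proposal is correct and follows essentially the same route as the paper, which simply cites the Gu--Tamaki grid-minor approximation and remarks that a wall of height $\lfloor r/2\rfloor$ can be extracted as a subgraph from an $r\times r$ grid minor. You supply the two details the paper leaves implicit -- that $\mathrm{gm}(G)\geq h^*$ because a wall contains a grid minor of comparable side length, and that the subcubic wall minor can be realized as a subdivision (hence a subgraph) via Steiner trees inside the branch sets -- and both are handled correctly.
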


In parameterized complexity theory, we consider the problem input as a pair $(I,k)$, where $I$ is the main part and $k$ the parameter. A problem is \emph{fixed-parameter tractable} if an instance $(I,k)$ can be solved in time $f(k)|I|^c$, where $f$ denotes a computable function that only depends on $k$, and where $c$ is a constant independent of $k$.

\section{Fixed-Parameter Tractability of {\sc Planar Contraction}}\label{s-main}

For our algorithm 
we need the 
aforementioned 
result of
Kawarabayashi~\cite{K09}.

\begin{theorem}[\cite{K09}]\label{t-vertex}
For every fixed integer~$k$, it is possible to find in $O(n)$ time a set $S$ of at most $k$ vertices in an $n$-vertex graph $G$ such that $G-S$ is planar, or conclude that such a set $S$ does not exist.
\end{theorem}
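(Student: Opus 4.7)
The plan is to exploit the fact that the class of $k$-apex graphs is closed under taking minors: if some set $S$ of at most $k$ vertices witnesses planarity of $G - S$, then an analogous set can be chosen in any minor of $G$ (by following the contractions/deletions through $S$). By the Robertson--Seymour Graph Minor Theorem, this class therefore has a finite obstruction set $\mathcal{F}_k$, and $G$ is $k$-apex if and only if it contains no $H \in \mathcal{F}_k$ as a minor. Since minor containment for a fixed pattern can be tested in linear time, this immediately yields an $O(n)$-time decision procedure; the real work is to actually output a witnessing set $S$ rather than just a yes/no answer.

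To construct $S$, I would combine bounded-treewidth dynamic programming with the irrelevant-vertex technique. First compute (approximately) the treewidth of $G$. If it is below some threshold $f(k)$, then the statement ``there exist at most $k$ vertices whose removal leaves a planar graph'' is expressible in monadic second-order logic, so a constructive form of Courcelle's theorem produces such a set in linear time.

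Otherwise, the treewidth is large, so by the excluded-grid theorem $G$ contains a large wall. The Robertson--Seymour flat-wall theorem then guarantees either a large clique minor---which restricts where an apex vertex must sit and yields exploitable structural information---or a large flat wall $W$ embedded planarly inside $G$ and attached to the rest of the graph through only a bounded-size cut. A vertex $v$ lying sufficiently deep inside such a flat wall is then \emph{irrelevant}: its presence is not forced by any obstruction from $\mathcal{F}_k$, so $G$ is $k$-apex if and only if $G - v$ is, and any witness set for $G - v$ also witnesses $k$-apex for $G$. Delete such a vertex and recurse on the smaller graph.

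The main obstacle is achieving linear total running time rather than $O(n^{O(1)})$. Each peeling step naively calls for a fresh flat-wall computation, but one cannot afford $\Omega(n)$ such calls. Overcoming this requires either removing a whole batch of irrelevant vertices from a single flat wall in one pass, or arranging the recursion along a tree decomposition so that each vertex is examined only a bounded number of times, in the spirit of the linear-time minor-testing algorithm of Kawarabayashi and Reed. Verifying irrelevance rigorously is the other delicate point: one must rule out that $v$ is needed as an apex vertex because of long-range structure passing through $v$, and this is exactly where the separation property built into the flat-wall theorem becomes essential.
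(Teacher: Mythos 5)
This theorem is not proved in the paper at all: it is imported verbatim from Kawarabayashi's FOCS 2009 paper \cite{K09} (with the Marx--Schlotter quadratic-time algorithm \cite{MS12} mentioned as an alternative), so there is no in-paper argument to compare yours against. Judged on its own terms, your outline correctly identifies the strategy that the cited works actually use --- minor-closedness of the $k$-apex class, the treewidth dichotomy with dynamic programming on one side, and flat-wall/irrelevant-vertex arguments on the other --- but it is a plan rather than a proof, and the two places where you explicitly defer (``the main obstacle is achieving linear total running time'' and ``verifying irrelevance rigorously is the other delicate point'') are precisely the entire technical content of Kawarabayashi's paper. Proving that a vertex deep inside a flat wall is irrelevant for the apex problem requires a careful case analysis on top of the Robertson--Seymour structure theorems, and the linear-time bookkeeping is a separate, substantial contribution; neither can be waved through.

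Two smaller but concrete issues. First, your opening reduction asserts that minor containment for a fixed pattern graph can be tested in linear time; the Robertson--Seymour algorithm is cubic and the best general bound known is quadratic, so even the decision version of your first paragraph does not reach $O(n)$ by that route. Second, the obstruction-set argument is non-constructive: Robertson--Seymour guarantees that $\mathcal{F}_k$ is finite but gives no way to compute it, so the resulting algorithm is non-uniform in $k$ and, more importantly for this theorem, purely a decision procedure --- as you note yourself, it does not produce the set $S$, which is what the statement (and its use in Theorem~\ref{t-main}) requires. Given that the theorem is a black-box citation in this paper, the appropriate ``proof'' here is simply the reference to \cite{K09}; if you want to reprove it, you would essentially have to reproduce that paper.
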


We also need the three following lemmas.

\begin{lemma}\label{l-vertexcontract}
If a graph $G=(V,E)$ can be contracted to a planar graph by using at most $k$ edge contractions, then
there exists a set $S\subseteq V$ with $|S|\leq k$ such that $G-S$ is planar.\footnote{As an aside, we point out that the reverse of this statement is not true. For instance, take a $K_5$ and subdivide each of its edges $p\geq 3$ times. The resulting graph can be made planar by one vertex deletion, but at least $p-1$ edge contractions are required.} 
\end{lemma}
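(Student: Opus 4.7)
The plan is to extract the set $S$ directly from an $H$-witness structure, where $H$ is any planar graph obtained by contracting at most $k$ edges of $G$. Suppose $G$ admits such a contraction to a planar graph $H$, and fix a corresponding $H$-witness structure $\mathcal{W}=\{W(x)\mid x\in V(H)\}$ of $G$. Since each witness set $W(x)$ must be contracted into a single vertex, the total number of contractions used is exactly $\sum_{x\in V(H)}(|W(x)|-1)=|V(G)|-|V(H)|$, and by hypothesis this quantity is at most $k$.

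Now, for each $x\in V(H)$, pick an arbitrary \emph{representative} vertex $v_x\in W(x)$, and let
\[
S \;=\; \bigcup_{x\in V(H)}\bigl(W(x)\setminus\{v_x\}\bigr).
\]
Then $|S|=\sum_{x\in V(H)}(|W(x)|-1)\leq k$, which gives the desired bound on the size of $S$.

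It remains to show that $G-S$ is planar. The vertex set of $G-S$ is $\{v_x\mid x\in V(H)\}$, and there is a natural bijection $\varphi\colon V(G-S)\to V(H)$ sending $v_x$ to $x$. Consider any edge $v_xv_y$ of $G-S$; since $v_x$ and $v_y$ are distinct representatives, we have $x\neq y$, so $v_x\in W(x)$ and $v_y\in W(y)$ witness that the sets $W(x)$ and $W(y)$ are adjacent in $G$. By the definition of an $H$-witness structure this forces $xy\in E(H)$. Hence $\varphi$ is an isomorphism from $G-S$ onto a subgraph of $H$. Since $H$ is planar and planarity is preserved under taking subgraphs, $G-S$ is planar.

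There is essentially no obstacle here: the argument is purely structural, and the only thing one must verify carefully is that the non-representative vertices are exactly what the contractions ``pay for,'' so that $|S|$ is bounded by the number of contractions. The footnote's cautionary example (a subdivided $K_5$) simply illustrates that the converse fails, and does not affect the direction we are proving.
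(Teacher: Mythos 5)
Your proof is correct and follows essentially the same approach as the paper: it keeps one representative per witness set of an $H$-witness structure, puts the remaining $|V(G)|-|V(H)|\leq k$ vertices into $S$, and observes that $G-S$ is then (isomorphic to) a spanning subgraph of the planar graph $H$. No issues.
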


\begin{proof}
Let $G=(V,E)$ be a graph on $n$ vertices that can be contracted to a planar graph $H$ by using $k'\leq k$ edge contractions.
Because each edge contraction reduces the number of vertices by exactly one, $H$ has $\ell=n-k'$ vertices $x_1,\ldots,x_\ell$.
Let ${\cal W}$ be an $H$-witness structure of $G$. We write $|W(x_i)|=w_i$ for $i=1,\ldots,\ell$. Note that $w_1+\ldots +w_\ell=n$.

For each witness set $W(x_i)$, we arbitrarily remove a set $S_i$ of $w_i-1$ vertices from $W(x_i)$.
We let $S=S_1\cup \cdots \cup S_\ell$. Because each witness set $W(x_i)$ has size 1 after removing the vertices of $S_i$, 
the graph $G-S$ is a spanning subgraph of $H$. Because $H$ is planar and the class of planar graphs is closed under edge deletion, this means that $G-S$ is planar. Moreover, $|S|=w_1-1+\ldots +w_\ell-1=n-\ell=k'\leq k$. This completes the proof of Lemma~\ref{l-vertexcontract}.\qed 
\end{proof}

When $k$ is fixed, we write {\sc $k$-Planar Contraction} instead of {\sc Planar Contraction}. A seminal result of Courcelle~\cite{Co90} states that on any class of graphs of bounded treewidth, every problem expressible in monadic second-order logic can be solved in time linear in the number of vertices of the graph.

\begin{lemma}\label{l-msol}
For every fixed integer $k$, the {\sc $k$-Planar Contraction} problem can be expressed in monadic second-order logic.
\end{lemma}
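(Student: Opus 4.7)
The plan is to reduce planarity of the contracted graph to Wagner's forbidden-minor characterization and then express the whole property directly in monadic second-order logic (MSOL). By Wagner's theorem, a graph is planar if and only if it contains neither $K_5$ nor $K_{3,3}$ as a minor. Performing at most $k$ edge contractions on $G$ is equivalent to choosing an edge set $F \subseteq E(G)$ with $|F| \leq k$ and passing to the graph $G/F$ obtained by contracting every edge of $F$. Hence {\sc $k$-Planar Contraction} is equivalent to asking whether there exists $F \subseteq E(G)$ of size at most $k$ such that $G/F$ contains neither $K_5$ nor $K_{3,3}$ as a minor, and this is the property I would encode.

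I would first existentially quantify $F$ as an edge set. The bound $|F| \leq k$ is MSOL-expressible for fixed $k$ by existentially quantifying $k$ edges $e_1, \ldots, e_k$ and demanding $F \subseteq \{e_1, \ldots, e_k\}$. The key technical step is then to translate the condition ``$G/F$ has no $K_5$ minor'' into an MSOL formula about $G$ itself. Since the vertices of $G/F$ are in bijection with the connected components of the spanning subgraph $(V(G), F)$, a $K_5$-minor model in $G/F$ lifts to five pairwise disjoint vertex sets $Z_1, \ldots, Z_5 \subseteq V(G)$ satisfying: (a) each $Z_i$ is \emph{$F$-closed}, meaning that for every edge $e \in F$ either both or neither of its endpoints lies in $Z_i$; (b) each $Z_i$ induces a connected subgraph of $G$; and (c) for every $i \neq j$ there is an edge of $G$ joining some vertex of $Z_i$ to some vertex of $Z_j$. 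I would encode the non-existence of a $K_5$ minor in $G/F$ as the negation of the existential statement over such families $(Z_1, \ldots, Z_5)$, and treat $K_{3,3}$ entirely analogously.

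Conditions (a) and (c) are immediate universal/existential quantifications over edges; condition (b) uses the standard MSOL formulation of connectivity (a non-empty set $S$ is connected if and only if for every partition $S = A \cup B$ into non-empty parts some edge of $G$ crosses between $A$ and $B$). Putting these ingredients together yields a single MSOL sentence whose length depends only on $k$, completing the proof. The main conceptual point, rather than a genuine obstacle, is to identify $F$-closedness as the correct compatibility condition linking the contracted graph $G/F$ back to the original graph $G$; once this is in place, everything else is a routine combination of MSOL-definable graph properties.
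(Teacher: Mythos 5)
Your proposal is correct and follows essentially the same route as the paper: existentially guess the at most $k$ contracted edges in MSOL, invoke the $K_5$/$K_{3,3}$ forbidden-minor characterization of planarity, and express that the contracted graph has no such minor by lifting minor models back to $G$. The only difference is in the encoding of that last step --- the paper states it as ``some guessed edge is a witness edge of every $K_5$- or $K_{3,3}$-witness structure of every subgraph of $G$'', while you describe minor models of $G/F$ directly via disjoint, $F$-closed, connected, pairwise adjacent branch sets in $G$ --- and both formulations are equivalent and MSOL-expressible.
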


\begin{proof}
By Kuratowski's Theorem (cf.~\cite{Diestel}), 
a graph is planar if and only if it does not contain $K_5$ or $K_{3,3}$ as a minor, i.e., if it cannot be modified into $K_5$ or $K_{3,3}$ by a sequence of edge contractions, edge deletions and vertex deletions. 
Recall that a graph $G$ contains a graph $H$ as a minor if and only if $G$ contains a subgraph that contains $H$ as a contraction, i.e., that has an $H$-witness structure.  

Let $G$ be a graph and let $k\geq 0$ be an integer.
We claim that $G$ can be contracted to a planar graph by at most $k$ edge contractions if and only if $G$ contains a set of edges 
$\{e_1,\ldots,e_{k'}\}$ for some $k'\leq k$ such that for all $K_5$-witness structures and all $K_{3,3}$-witness structures of all subgraphs of $G$, at least one of the edges in $\{e_1,\ldots,e_{k'}\}$ is a witness edge. 

First suppose that $G$ can be contracted to a planar graph $H$ by at most $k$ edge contractions. 
Let $E'\subseteq E(G)$ be a set of at most $k$ edges such that contracting all edges in $E'$ transforms $G$ into $H$.
Suppose, for contradiction, that there is a subgraph $F$ of $G$ that has a $K_5$-witness structure or a $K_{3,3}$-witness structure ${\cal W}$ for which none of the edges in $E'$ is a witness edge. Then the graph $F'$ that is obtained from $F$ by contracting all edges in $E'\cap E(F)$ also has a $K_5$-witness structure or a $K_{3,3}$-witness structure. Since $F'$ is a subgraph of $H$, this means that $H$ contains $K_5$ or $K_{3,3}$ as a minor, contradicting the assumption that $H$ is planar.

For the reverse direction, suppose that $G$ contains a set of edges $\{e_1,\ldots,e_{k'}\}$ for some $k'\leq k$ such that for all $K_5$-witness structures and all $K_{3,3}$-witness structures of all subgraphs of $G$, at least one of the edges in $\{e_1,\ldots,e_{k'}\}$ is a witness edge. Let $H$ be the graph obtained from $G$ by contracting each of the edges $e_1,\ldots,e_{k'}$, 
and let ${\cal W}$ be an $H$-witness structure of $G$.
We claim that $H$ is planar. Suppose, for contradiction, that $H$ is not planar. Then $H$ contains a subgraph $F'$ that has
a $K_5$-witness structure or a $K_{3,3}$-witness structure ${\cal W}'$. We consider the subgraph $F$ of $G$ induced by 
the union of 
the $H$-witness sets $W(x)$ of ${\cal W}$ with $x\in V(F')$. For each witness set $W'(a)$ of ${\cal W}'$ we define 
the set $W^*(a)$ to be the union of 
the $H$-witness sets $W(x)$ of ${\cal W}$ with $x\in W'(a)$.
Then the sets $W^*(a)$ form a $K_5$-witness structure or a $K_{3,3}$-witness structure of $F$ in which none of the edges $e_1,\ldots,e_{k'}$ is a witness edge, contradicting our assumption on the set 
$\{e_1,\ldots,e_{k'}\}$.

We observe that for any fixed graph $H$, the property ``having $H$ as a subgraph'' can be expressed in monadic second-order logic. Also ``having an $H$-witness structure'' can be expressed in monadic second-order logic, as we can express the properties ``being connected'' and ``being adjacent'' in monadic second-order logic.
This completes the proof of Lemma~\ref{l-msol}.\qed 
\end{proof}

It is possible that the following lemma or a close variant of it is already known in the literature. However, because we could not find a reference, we give its proof here.

\begin{lemma}\label{l-combined}
Let $B$ be a planar graph that has an embedding with two nested cycles $C_1$ and $C_2$, such that $C_1$ is the boundary of its outer face and $C_2$ is the boundary of an inner face, and such that there are at least two vertex-disjoint paths that join vertices of $C_1$ and $C_2$. Let $I$ be a graph with $B\cap I=C_2$ 
such that $R=B\cup I$ is planar. Then $R$ has an embedding
such that $C_1$ is the boundary of the outer face.
\end{lemma}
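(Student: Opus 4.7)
The plan is to start from the given embedding $\mathcal{E}_B$ of $B$, where $C_1$ bounds the outer face and $C_2$ bounds an inner face $f$, and to extend $\mathcal{E}_B$ to a planar embedding of $R$ by placing each connected component of $I-V(C_2)$ strictly inside $C_1$, thereby preserving $C_1$ on the outer face. Since $B\cap I=C_2$, every connected component of $R-V(C_2)$ lies entirely in $V(B)\setminus V(C_2)$ (already handled by $\mathcal{E}_B$) or in $V(I)\setminus V(C_2)$ (an \emph{$I$-component}), so only the $I$-components need to be inserted.

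To decide where to place each $I$-component, I would take an arbitrary planar embedding $\mathcal{E}$ of $R$, which exists by hypothesis. In $\mathcal{E}$, the cycle $C_2$ separates the sphere into two closed disks $\Delta_1,\Delta_2$ with $\Delta_1\cap\Delta_2=C_2$. Since $C_1$ is a connected cycle disjoint from $C_2$, we may assume $V(C_1)\subseteq \mathrm{int}(\Delta_1)$; the internal vertices of the two vertex-disjoint paths $P_1,P_2$ from $C_1$ to $C_2$ then also lie in $\mathrm{int}(\Delta_1)$. The subgraph $C_1\cup P_1\cup P_2$, together with $C_2=\partial\Delta_1$, partitions $\Delta_1$ into three open sub-regions: the interior of $C_1$, and two annular sub-regions $A_1,A_2$ whose boundaries contain the two arcs $\beta_1,\beta_2$ into which the endpoints $b_1,b_2$ of $P_1,P_2$ divide $C_2$. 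Any $I$-component in $\mathrm{int}(\Delta_1)$ must lie in $A_1$ or $A_2$, not in the interior of $C_1$ (its attachments are on $C_2$, which is disjoint from $V(C_1)$), so its attachment set is contained in $\beta_1$ or $\beta_2$ correspondingly.

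We then construct the desired embedding as follows: starting from $\mathcal{E}_B$, for each $I$-component $X$, place $X$ inside the inner face $f$ of $\mathcal{E}_B$ if $X\subseteq\mathrm{int}(\Delta_2)$ in $\mathcal{E}$, and inside the annular face of $\mathcal{E}_B$ whose boundary contains the arc $\beta_i$ if $X\subseteq A_i$. Within each face of $\mathcal{E}_B$ the newly placed $I$-components coexist without crossings because they already coexisted in the topologically equivalent disk of $\mathcal{E}$. The resulting embedding of $R$ is therefore planar, and $C_1$ still bounds its outer face since every insertion happens strictly inside $C_1$.

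The main obstacle is verifying that the decomposition of $\Delta_1$ in $\mathcal{E}$ matches the face structure of $\mathcal{E}_B$ inside $C_1$, in the sense that each arc $\beta_i$ bounds the ``same'' annular face in both embeddings. This relies on the essential uniqueness, up to reflection, of the planar embedding of the subgraph $C_1\cup P_1\cup P_2\cup C_2$, which is ensured precisely by the two vertex-disjoint paths connecting the two cycles.
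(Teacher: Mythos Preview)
Your plan to keep $\mathcal{E}_B$ fixed and insert the $I$-components one by one has a genuine gap at the step where you place an $I$-component $X\subseteq A_i$ into ``the annular face of $\mathcal{E}_B$ whose boundary contains the arc $\beta_i$''. In general no such face exists: $B$ is an arbitrary planar graph, and the region between $C_1$ and $C_2$ in $\mathcal{E}_B$ may be subdivided by additional vertices and edges of $B$ into many small faces, none of which has all of $\beta_i$ on its boundary. (A single extra $B$-vertex adjacent to three consecutive vertices of $\beta_i$ already destroys any such face.) Your closing paragraph appeals to the essentially unique embedding of the skeleton $C_1\cup P_1\cup P_2\cup C_2$, but that controls only the skeleton, not the face structure of the full graph $B$. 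Nor can you divert the $A_i$-components into the inner face $f$ instead: their attachments on $\beta_i$ may interlock, in the cyclic order along $C_2$, with attachments of the $\Delta_2$-components you have already placed in $f$, precisely because in $\mathcal{E}$ these two families sat on opposite sides of $C_2$.

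The paper avoids this obstacle by working in the opposite direction. It fixes an embedding $\mathcal{R}$ of $R$ with $C_2$ inside $C_1$ and with the set of vertices and edges outside $C_1$ of minimum size, and then argues that this set is empty. If some component $H$ still lies outside $C_1$ in $\mathcal{R}$, one first checks (using $B\cap I=C_2$) that $H$ consists entirely of $B$-vertices, and then uses the given embedding $\mathcal{B}$ of $B$ only as a certificate that there is a cycle $C$, lying between $C_1$ and $C_2$ in $\mathcal{R}$, through all attachment points $u_1,\ldots,u_k$ of $H$ on $C_1$; the two vertex-disjoint paths from $C_1$ to $C_2$ are what guarantee that $C$ is a genuine cycle. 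A separate chord-free argument (Claims~1 and~2 in the paper) then produces an empty disk inside $C$ in $\mathcal{R}$ touching all the $u_j$, into which $H$ can be moved, contradicting minimality. Thus the paper modifies the embedding of $R$ rather than trying to extend the embedding of $B$, which is exactly what lets it sidestep the difficulty your argument runs into.
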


\begin{proof}
Let $\cal B$ be a plane graph that is an embedding of $B$ such that $C_1$ is the boundary of the outer face, and $C_2$ is the boundary of an inner face.
Let $\cal R$ be an embedding of $R$ such that $C_2$ lies inside $C_1$ and, subject to this condition, the set of vertices and edges of $R$ that lie in the outer region of $C_1$ has minimum size. To simplify our arguments, we assume that each edge of $\cal R$ that lies outside $C_1$ has at least one endvertex not on $C_1$, i.e., that also lies outside $C_1$; otherwise we can subdivide each edge outside $C_1$ that has both its end-vertices in $C_1$ without loss of generality.

Let $X$ be the set of vertices of ${\cal R}$ that are in the outer 
region of $C_1$. If $X=\emptyset$, then we have the desired embedding. 
Suppose that $X\neq\emptyset$. Let $H$ be a connected component of ${\cal R}[X]$. Let $u_1,\ldots,u_k$ be the vertices of $C_1$ that are adjacent to $H$,  
and assume that the vertices $u_1,\ldots,u_k$ appear on $C_1$ in this order (see also Figure~\ref{fig:R}).
Note that $k\geq 2$, as otherwise $H$ could be embedded inside $C_1$, contradicting the minimality of $X$. 

We claim that $H$ contains no vertex of $I$. In order to obtain a contradiction, suppose that $I\cap H$ contains a vertex $s$. 
Because $H$ is connected and every $u_i$ has a neighbor in $H$ by definition, the subgraph of ${\cal R}$ induced by $V(H)\cup \{u_1,\ldots,u_k\}$ is connected. Hence it contains a path from $s$ to a vertex in $B-C_2$, namely $u_1$ (or any other vertex $u_j$). This path contains no vertex of $C_2$. Consequently, it must contain an edge between a vertex of $I-C_2$ and a vertex of $B-C_2$. This is not possible, because $B\cap I=C_2$.

Now let $Y$ be the set of vertices and edges of ${\cal R}$ that lie in the inner 
region of $C_2$. Let ${\cal B}'$ be the plane graph obtained from $\cal B$ after removing all vertices of $X\setminus V(H)$ together with their incident edges and all elements of $Y$. 
If we remove all vertices of $H$ from ${\cal B}'$, we obtain a plane graph ${\cal B}''$. 
Because $H$ contains no vertex of $I$, all its vertices belong to $B-(C_1\cup C_2)$. Then, by the definition of ${\cal B}$, $H$ must be embedded in the inner region of $C_1$ in 
${\cal B}$, and hence also in ${\cal B}'$. 
Moreover, because there are at least two vertex-disjoint paths that join $C_1$ and $C_2$,
there exists a face in ${\cal B}''$ whose boundary contains a cycle 
$C\notin \{C_1,C_2\}$ 
such that all vertices and edges of $H$ lie in the inner region of $C$ in ${\cal B}'$. 
By the definition of $H$, the only vertices of $B$ that are adjacent to vertices of $H$ are the vertices $u_1,\ldots,u_k$.
As a result, cycle $C$ contains the vertices $u_1,\ldots,u_k$, and they appear in the same order on $C$ as they do on $C_1$.
Moreover, $C$ does not contain any vertex of $H$, as $C$ is a cycle in ${\cal B}''$, and ${\cal B}''$ does not contain any vertex of $H$ by definition. Because $C$ also does not contain any vertex from $(X\setminus V(H))\cup Y$, we find that all vertices of $C$ not on $C_1$ or $C_2$ lie between $C_1$ and $C_2$ in ${\cal R}$ (see Figure~\ref{fig:R} for an illustration). 

\begin{figure}[ht]
\centering
\includegraphics[scale=.9]{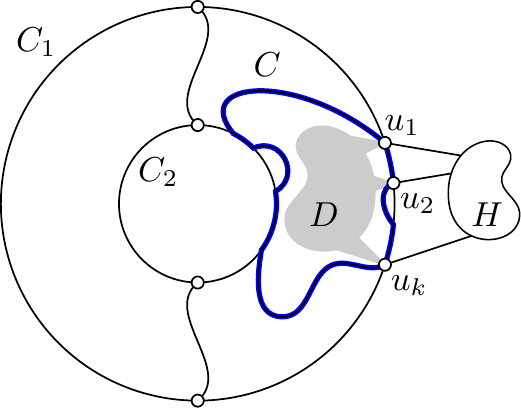}
\caption{An example of a plane graph $\cal R$, in which $H$ is the only 
connected component of $R[X]$. The cycle $C$ is drawn as the bold (blue) cycle.
The shaded area indicates the empty region $D$, in which the connected component $H$ can be embedded.
\label{fig:R}}
\end{figure}

Let $f$ denote the inner region of $C$ in $\cal R$. We say that a path in $\cal R$ that joins two vertices of $C$ and that lies in $f$ is a \emph{chord} of $C$. A chord $P$ {\it separates} $u_i$ and $u_j$ if $u_i$ and $u_j$ are not on the boundary of the same inner face of $C\cup P$ in ${\cal R}$.

\medskip
\noindent
{\it Claim 1. The graph $\cal R$ has no chords separating $u_i$ and $u_j$ for all $1\leq i<j\leq k$.}

\medskip
\noindent
To obtain a contradiction, let $x$ and $y$ be the end-vertices of a chord $P$ in ${\cal R}$ that separates $u_i$ and $u_j$ for some $1\leq i<j\leq k$. By definition, $x,y\in V(C)$. Because $P$ separates $u_i$ and $u_j$, at least one of the vertices $x,y$ does not belong to $C_2$. We assume without loss of generality that $x$ is not on $C_2$. Hence, $x$ is a vertex of $B-I$.

We claim that all vertices and edges of $P$ belong $B$. As a matter of fact, we prove something stronger by showing that all vertices of 
$P-\{y\}$ belong to $B-I$, and that $y$ either belongs to $B-I$ or to $C_2$. In order to see this, suppose that not all vertices of $P$ belong to $B-I$. Let $z$ be the first vertex on $P$ that belongs to $I$ when we start traversing $P$ from $x$. Because $B\cap I=C_2$, we find that $z$ is on $C_2$. Recall that in ${\cal R}$ all vertices inside $C$ lie between $C_1$ and $C_2$ as shown in Figure~\ref{fig:R}. Hence $z=y$, which implies that $y$ belongs to $C_2$.

Because all vertices and edges of $P$ belong to $B$, we find that $P$ is embedded inside $C_1$ in ${\cal B}$ by definition. However, then the connected subgraph $H$ cannot be embedded in the inner region of $C_1$ in ${\cal B}$. 
This contradiction completes the proof of Claim~1.

\medskip
\noindent
For $k=2$, Claim 1 immediately implies that the embedding $\cal R$ of $R$ can be modified in such a way that $H$ is placed in $f$, as we can find an empty open disk $D$ inside 
$C$, such that no vertex or edge of ${\cal R}$ lies in $D$ and the boundary of $D$ contains $u_1,\ldots,u_k$ (see Figure~\ref{fig:R} for an illustration). This contradicts the minimality of $X$. 
If $k\geq 4$, then there cannot exist a path in $f$ that 
joins two vertices $u_i,u_j$ such that 
$2\leq |i-j|\leq k-2$ as a result of Claim 1.  
Hence we can, like in the case $k=2$, find an empty open disk $D$ inside 
$f$, and obtain a new embedding of $R$ by placing $H$ inside $D$.

For the case $k=3$, we need the following claim in addition; we can prove this claim by the same arguments as we used in the proof of Claim~1.

\medskip
\noindent
{\it Claim 2. There is no vertex $v$ in $\cal R$ that is in $f$ such that $v$ is joined with $u_1,u_2,u_3$ by disjoint paths in $f$.}

\medskip
\noindent
Claims 1 and 2 together imply that we can find a desired empty region $D$ inside $C$ in the case $k=3$, just as in the cases
$k=2$ and $k\geq 4$. 
This completes the proof of Lemma~\ref{l-combined}.\qed
\end{proof}

We are now ready to present our main theorem, which shows that {\sc Planar Contraction} is fixed-parameter tractable when parameterized by $k$. At some places in our proof of Theorem~\ref{t-main} we allow constant factors (independent of $k$) to be less than optimal in order to make the arguments easier to follow.

\begin{theorem}\label{t-main}
For every fixed integer $k$ and every constant $\epsilon>0$, the {\sc $k$-Planar Contraction} problem can be solved in $O(n^{2+\epsilon})$ time.
\end{theorem}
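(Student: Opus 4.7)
My plan is to design an iterative algorithm that contracts \emph{irrelevant} edges in order to reduce $G$ to a graph of bounded treewidth (in terms of $k$), after which Lemma~\ref{l-msol} combined with Courcelle's theorem decides the instance in linear time. I call an edge $e$ \emph{irrelevant} if $(G,k)$ is a yes-instance of {\sc $k$-Planar Contraction} if and only if $(G/e,k)$ is. The algorithm first invokes Theorem~\ref{t-vertex} to compute in $O(n)$ time a set $S\subseteq V(G)$ with $|S|\le k$ such that $G-S$ is planar, outputting ``no'' (justified by Lemma~\ref{l-vertexcontract}) if no such set exists. Then it repeats the following loop: apply Theorem~\ref{t-approx} to $G-S$ in time $O(n^{1+\epsilon})$. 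If the returned wall has height at most a threshold $h(k)$ polynomial in $k$, then $G-S$, and hence $G$, has treewidth bounded by a function of $k$, and we finish via Courcelle's theorem. Otherwise we locate an irrelevant edge in the wall, contract it, and iterate. Since contracting an edge inside $G-S$ keeps $G-S$ planar, the apex set $S$ remains valid throughout.

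The heart of the proof is to show that a large enough wall in $G-S$ contains an irrelevant edge. Let $W$ be a wall of height much larger than $k$, with nested cycles $C_1,\ldots,C_q$ in its unique planar embedding. I would use a pigeonhole argument on the at most $k$ vertices of $S$ together with the fact that any hypothetical contraction sequence uses $\le k$ witness edges to identify a ``clean'' cycle $C_i$ deep in the middle of $W$, where ``clean'' means that no vertex of $S$ is adjacent to a vertex of $C_i$ and that in any hypothetical yes-solution no witness set contains more than one vertex of $C_i$. An edge $e$ on such a clean $C_i$ is then declared irrelevant. The forward direction $(G,k)\Rightarrow(G/e,k)$ is immediate: any contraction sequence for $G$ can be modified to one for $G/e$ using no more contractions. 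For the backward direction, I would take a witness structure of a contraction of $G/e$ to a planar graph $H$, split the contracted endpoint of $e$ back into the two original vertices $u,v$ as separate singleton witness sets, and observe that $G$ then contracts to the vertex-split graph $H'$ using exactly the same number of contractions. The nontrivial step is checking that $H'$ is planar; this is where Lemma~\ref{l-combined} enters, with $B$ taken to be the planar subgraph of $H'$ inherited from the wall region around $C_i$ (furnishing the two nested cycles required by the lemma) and $I$ taken to be the remainder of $H'$, while the two vertex-disjoint paths between $C_1$ and $C_2$ required by the lemma are supplied by the rigid brick structure of the wall.

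The main technical obstacle I anticipate is ensuring that a clean cycle genuinely exists, because a vertex of $S$ could in principle be adjacent to vertices in every layer of~$W$. To get around this I would exploit the yes-instance assumption: too many distinct layers being ``attacked'' by vertices of $S$ would force more than $k$ contractions in any valid solution, so either we can certify ``no'' directly or else each vertex of $S$ touches only boundedly many layers of $W$ in any valid witness structure. Once this pigeonhole is set up, each iteration of the main loop costs $O(n^{1+\epsilon})$ and reduces $|V(G)|$ by one, so after $O(n)$ iterations the treewidth becomes bounded in $k$ and Courcelle's theorem finishes in linear time. The total running time is $O(n\cdot n^{1+\epsilon})=O(n^{2+\epsilon})$, as claimed.
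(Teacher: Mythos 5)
Your skeleton matches the paper's: compute an apex set $S$ via Theorem~\ref{t-vertex}, find a large wall in $G-S$ via Theorem~\ref{t-approx}, either conclude bounded treewidth and finish with Lemma~\ref{l-msol} and Courcelle's theorem, or contract an irrelevant edge and iterate; the running-time accounting is also identical. The gap lies in the one step that carries all the combinatorial weight: proving that a suitable irrelevant edge exists. Your ``clean cycle'' $C_i$ is defined by two conditions, and both are problematic. The condition that in ``any hypothetical yes-solution no witness set contains more than one vertex of $C_i$'' is not something the algorithm can verify for a fixed cycle, since the solution is unknown and different solutions touch different layers; the standard fix (and the paper's) is to surround the candidate edge by $2k+1$ nested triple layers so that \emph{every} solution with at most $k$ contractions leaves at least one of them intact — you gesture at a pigeonhole but pin the property to a single chosen cycle. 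More seriously, non-adjacency of $S$ to the single cycle $C_i$ is too weak for the reinsertion step: to apply Lemma~\ref{l-combined} you must show that $R=B\cup I$ is planar, and for that you need the \emph{entire interior region} containing $u,v$ (your graph $I$) and the annulus $B$ to avoid all vertices of $S$ and all witness sets meeting $S$; a vertex of $S$ with a neighbour strictly inside $C_i$ but none on $C_i$ already destroys this.

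The deepest missing piece is your dichotomy ``either too many layers of $W$ are attacked by $S$, which forces more than $k$ contractions, or each vertex of $S$ touches boundedly many layers.'' This is asserted, not proved, and it is doubtful as stated: a single apex vertex adjacent to one vertex in each of many nested layers (say along a diagonal of the underlying grid) can sometimes be repaired by a single contraction, so the number of attacked layers does not directly lower-bound the number of contractions needed. The paper's replacement for this step is the genuinely nontrivial part of the proof: it packs $2k+1$ pairwise disjoint, pairwise non-adjacent subwalls $W_1,\dots,W_{2k+1}$ into $W$, classifies their attachment sets $S_i$ into type~1 and type~2, shows that $k+1$ sets of type~1 yield $k+1$ $K_5$-witness structures of subgraphs of $G$ whose incident edge sets are pairwise disjoint (hence a certified no-instance), and otherwise deduces by counting — each nonempty type-2 set owns a private vertex of $S$ and $|S|\le k$ — that some subwall satisfies $S_i=\emptyset$, i.e., its whole interior is untouched by $S$. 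That is exactly the property your argument needs but does not secure; without it (or an equivalent), the existence of the irrelevant edge is not established and the iteration does not go through.
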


\begin{proof}
Let $G$ be a graph on $n$ vertices, and let $k$ be some fixed integer. If $G$ has connected components $L_1,\ldots,L_q$ for some $q\geq 2$, then we solve for every possible tuple $(k_1,\ldots,k_q)$ with $\sum_{i=1}^qk_i=k$ the instances $(L_1,k_1),\ldots, (L_q,k_q)$. Hence, we may assume without loss of generality that $G$ is connected.
We apply Theorem~\ref{t-vertex} to decide in $O(n)$ time whether $G$ contains a subset $S$ of at most $k$ vertices such that $G-S$ is planar. If not, then we return {\tt no} due to Lemma~\ref{l-vertexcontract}.
Hence, 
from now on we assume that we have found such a set $S$. 
We write $H=G-S$. 

Choose $\epsilon>0$. We apply Theorem~\ref{t-approx} on the graph $H$ to find in $O(n^{1+\epsilon})$ time a subgraph $W$ of $H$ that is a wall with height $h\geq h^*/c_\epsilon$, where $h^*$ denotes the height of a largest wall in $H$ and $c_\epsilon>3$ is some constant.

Suppose that 
$h\leq  \lceil \sqrt{2k+1}\rceil(12k+10)$. Then $h^*\leq c_\epsilon h \leq c_\epsilon \lceil\sqrt{2k+1}\rceil(12k+10)$, i.e., the height of a largest wall in $H$ is bounded by a constant. Consequently, the treewidth of $H$ is bounded by a 
constant~\cite{RS94}.
Since deleting a vertex from a graph decreases the treewidth by at most~1, the treewidth of $G$ is at most $|S|\leq k$ larger than the treewidth of $H$. 
Because $k$ is fixed, this means that the treewidth of $G$ is bounded by a constant as well. Then 
Lemma~\ref{l-msol} tells us that we may apply  Courcelle's Theorem~\cite{Co90} to check in $O(n)$ time if $G$ can be modified into a planar graph by using at most $k$ edge contractions. 

\begin{figure}
\centering
\includegraphics[scale=0.63]{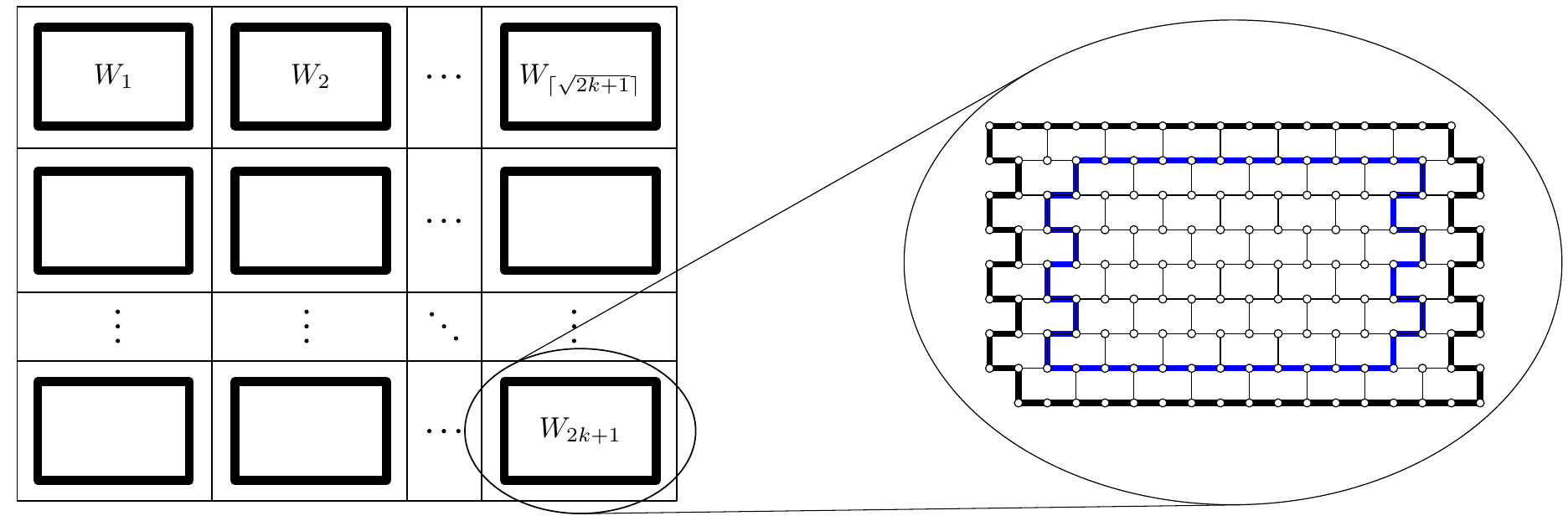}
\caption{
On the left, 
a schematic depiction of the wall $W$ with height 
$h > \lceil \sqrt{2k+1}\rceil(12k+10)$ and the way the subwalls $W_1,\ldots,W_{2k+1}$, each with height 
$12k+8$, are packed within $W$. On the right, a more detailed picture of a subwall $W_i$ of height 
8 in case $k=0$. The bold blue edges indicate the perimeter of the smaller subwall $W_i'$ of height 6.}
\label{f-wallpacking}
\end{figure}

Now suppose that 
$h > \lceil \sqrt{2k+1}\rceil(12k+10)$. We consider some fixed planar embedding of $H$. For convenience, whenever we mention the graph $H$ below, we always refer to this fixed embedding. The wall $W$ is contained in some connected component $\tilde{H}$ of $H$, and we assume without loss of generality that all other connected components of $H$ lie outside $P(W)$. Inside $P(W)$, we choose $2k+1$ mutually vertex-disjoint subwalls $W_1,\ldots, W_{2k+1}$ of height $12k+8$ that are packed inside $W$ in $\lceil \sqrt{2k+1}\rceil$ rows of $\lceil \sqrt{2k+1}\rceil$ subwalls, such that vertices of distinct subwalls are not adjacent; see Figure~\ref{f-wallpacking}.
Inside each $W_i$, we choose a subwall $W_i'$ of height $12k+6$ such that the perimeters of $W_i$ and $W_i'$ are vertex-disjoint; see Figure~\ref{f-wallpacking} for a depiction of $W_i$ and $W_i'$ in case $k=0$.
By definition, the inner region of $P(W_i)$ is the region that contains 
the vertices of $W_i'$, and the inner region of $P(W_i')$ is the region that contains no vertex of $P(W_i)$. 
Note that the interiors of $P(W_i)$ and $P(W_i')$ are defined with respect to (the fixed planar embedding of) the graph $H$.
Hence, these interiors may contain vertices of $H$ that do not belong to $W$, as $W$ is a subgraph of $H$.

We now consider the graph $G$. Recall that $H=G-S$, and that $G$ is not necessarily planar. Hence, whenever we speak about the interior of some cycle below, we always refer to the interior of that cycle with respect to the fixed planar embedding of $H$. For $i=1,\ldots,2k+1$, let $S_i\subseteq S$ be the subset of vertices of $S$ that are adjacent to an interior vertex of $P(W_i')$. Observe that the sets $S_i$ are not necessarily disjoint, since a vertex of $S$ might be adjacent to interior vertices of $P(W_i')$ for several values of $i$. Also note that no vertex of $S$ belongs to $W$, since $W$ is a wall in the graph $H=G-S$. We can construct the sets $S_i$ in $O(n)$ time, because the number of edges of $G$ is $O(n)$. The latter can be seen as follows.
The number of edges in $G$ is equal to the sum of the number of edges of $H$, the number of edges between $H$ and $S$, and the number of edges of $G[S]$. Because $H$ is planar, the number of edges of $H$ is at most $5|V(H)|\leq 5n$. 
Hence, the number of edges of $G$ is at most $5n+kn+\frac{1}{2}k(k-1)=O(n)$ for fixed $k$. 

We say that a set $S_i$ is of {\it type 1} if $S_i$ is non-empty and if every vertex $y\in S_i$ also belongs to some set $S_j$ for $j\neq i$, i.e., every vertex $y\in S_i$ is adjacent to some vertex $z$ that lies inside $P(W'_j)$ for some $j\neq i$; see Figure~\ref{f-witnessstructure} for an illustration. Otherwise we say that $S_i$ is of {\it type 2}. We can check in $O(n)$ time how many sets $S_i$ are of type 1. We claim the following.

\medskip
\noindent
{\it Claim 1. If there are at least $k+1$ sets $S_i$ of type 1, then $(G,k)$ is a no-instance}.

\medskip
\noindent
We prove Claim 1 as follows. Suppose that there exist $\ell\geq k+1$ sets $S_i$ of type~1, say these sets are $S_1,\ldots,S_\ell$. Then for each $i=1,\ldots,\ell$ we can define a $K_5$-witness structure ${\cal X}_i$ of a subgraph of $G$ as follows. We divide the perimeter of $W_i$ into three connected non-empty parts in the way illustrated in Figure~\ref{f-witnessstructure}. 
\begin{figure}
\centering
\includegraphics[scale=0.95]{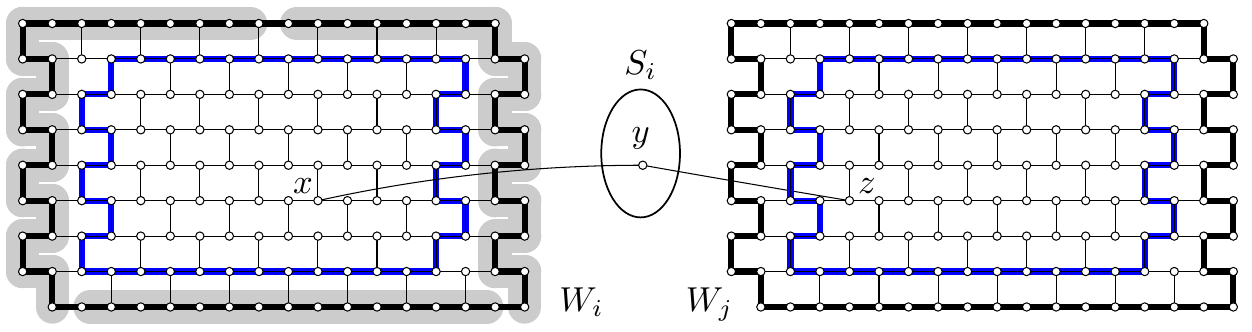}
\caption{
Two subwalls $W_i$ and $W_j$, where the bold blue edges indicate the perimeters $P(W_i')$ and $P(W_j')$ of the smaller subwalls $W_i'$ and $W_j'$. Vertex $y$ is in $S_i$, since it is adjacent to an interior vertex $x$ of $P(W_i')$. 
If, for every $y\in S_i$, there is an edge between $y$ and an interior vertex $z$ of $P(W'_j)$ for some $j\neq i$, then $S_i$ is of type 1. The three shaded areas indicate how the perimeter of $W_i$ is divided into three non-empty parts, each forming a separate witness set of a $K_5$-witness structure ${\cal X}_i$ of a subgraph of $G$.}
\label{f-witnessstructure}
\end{figure}
The vertices of each part will form a separate witness set of ${\cal X}_i$; let us call these witness sets $X_i^1, X_i^2, X_i^3$. 
Let $H'_i$ be the subgraph of $H$ induced by the vertices that lie inside $P(W_i)$ in $H$. The fourth set $X_i^4$ of the witness structure ${\cal X}_i$ consists of all the vertices of the connected component of $H'_i$ that contains $W_i'$.
Let $G'_i$ be the graph obtained from $G$ by deleting all the vertices of $P(W_i)$ and all the vertices that lie inside $P(W_i)$ in $H$, i.e., $G'_i=G-P(W_i)-\mbox{interior}_H(P(W_i))$. 
Let $D$ be the connected component of $G'_i$ that contains the perimeter $P(W)$ of the large wall $W$. It is clear that $D$ contains all vertices that are on or inside $P(W_j)$ for every $j\neq i$. Hence, due to the assumption that $S_i$ is of type 1, all vertices of $S_i$ also belong to $D$. The fifth set $X_i^5$ is defined to be the vertex set of $D$. Let us argue why these five sets form a $K_5$-witness structure of a subgraph of $G$. 

It is clear that each of the sets $X_i^1,X_i^2,X_i^3$ is connected, and that they are pairwise adjacent. The set $X_i^4$ is connected by definition. The choice of the subwall $W_i'$ within $W_i$ ensures that $X_i^4$ is adjacent to each of the sets $X_i^1,X_i^2,X_i^3$. Let us consider the set $X_i^5$. By definition, $X_i^5$ is connected. Since $X_i^5$ contains the perimeter $P(W)$ of the large wall $W$ and $W_i$ lies inside $P(W)$, set $X_i^5$ is adjacent to $X_i^1$, $X_i^2$ and $X_i^3$. Since $S_i$ is of type 1 and hence non-empty, there is a vertex $y\in S_i$ that is adjacent to a vertex $x$ that lies inside $P(W'_i)$ by the definition of $S_i$. We already argued that $X_i^5$ contains all vertices of $S_i$, so $y\in X_i^5$.
Recall that $\tilde{H}$ is the unique connected component of $H$ that contains the wall $W$, and that all connected components of $H$ other than $\tilde{H}$ were assumed to lie outside $P(W)$ in $H$. Because $x$ lies inside $P(W'_i)$, this means that $x$ is in the connected component of $H'_i$ that contains $W_i'$, implying that $x\in X_i^4$. Consequently, the edge between $x$ and $y$ ensures the adjacency between $X_i^4$ and~$X_i^5$.

We now consider the $\ell$ different $K_5$-witness structures ${\cal X}_i$ of subgraphs of $G$ defined in the way described above, one for each $i\in \{1,\ldots,\ell\}$. 
Let us see how such a $K_5$-witness structure ${\cal X}_i$ can be destroyed by using edge contractions only. 
Denote by $E_i$ the set of edges of $G$ incident with the vertices of $X_i^1\cup\cdots \cup X_i^4$ for $i=1,\ldots,\ell$. 
We can only destroy a witness structure ${\cal X}_i$ by edge contractions if we contract the edges 
of at least one path that has its endvertices in different witness sets of ${\cal X}_i$ 
and its inner vertices (in case these exist) not belonging to any witness set of ${\cal X}_i$.
Clearly, such a path always contains an edge of $E_i$. 
Hence, in order to destroy ${\cal X}_i$, we have to contract at least one edge of $E_i$.
Because the sets $E_1,\ldots,E_{\ell}$ are pairwise disjoint
by the construction of the witness structures ${\cal X}_i$,
we must use at least 
$\ell \geq k+1$ edge contractions in order to make $G$ planar. Hence, $G$ is a no-instance. This proves Claim~1.

\medskip
\noindent
Due to Claim 1, we are done if there are at least $k+1$ sets $S_i$ of type 1. Note that every step in our algorithm so far took $O(n^{1+\epsilon})$ time, as desired.
Suppose that we found at most $k$ sets $S_i$ of type 1. Because the total number of sets $S_i$ is $2k+1$, this means that there are at least $k+1$ sets $S_i$ of type 2. 
Let $S_i$ be a set of type 2. If $S_i$ is non-empty, 
then $S_i$ contains a vertex $x$ that is not adjacent to an interior vertex of $P(W'_j)$ for any $j\neq i$, as otherwise $S_i$ would be of type 1. 
Consequently, $S_i$ is the only set of type 2 that contains $x$. Since $|S|\leq k$ and there are at least $k+1$ sets of type 2, at least one of them must be empty. Without loss of generality, we assume from now on that $S_1=\emptyset$.

We will now exploit the property that $S_1=\emptyset$, i.e., that none of the vertices in the interior of $P(W_1')$ is adjacent to any  vertex of $S$. We define a {\it triple layer} as 
the perimeter of a wall with the perimeters of its two largest 
proper subwalls inside, 
such that the three perimeters are mutually vertex-disjoint, and the {\it middle} perimeter is adjacent to the {\it outer} and {\it inner} perimeter; 
see Figure~\ref{f-triplelayer} for an illustration. We define a sequence of nested triple layers in the same way as we defined a sequence of layers in Section~\ref{s-pre}. Because $W_1'$ has height $12k+6$, there exist two adjacent vertices $u$ and $v$ inside $P(W_1')$, such that $u$ and $v$ are separated from the vertices outside $P(W_1')$ by $2k+1$ nested triple layers $L_1,\ldots,L_{2k+1}$, i.e., $u$ and $v$ lie inside the inner perimeter of triple layer $L_{2k+1}$ (see also Figure~\ref{f-triplelayer}). 

\begin{figure}
\centering
\includegraphics[scale=0.6]{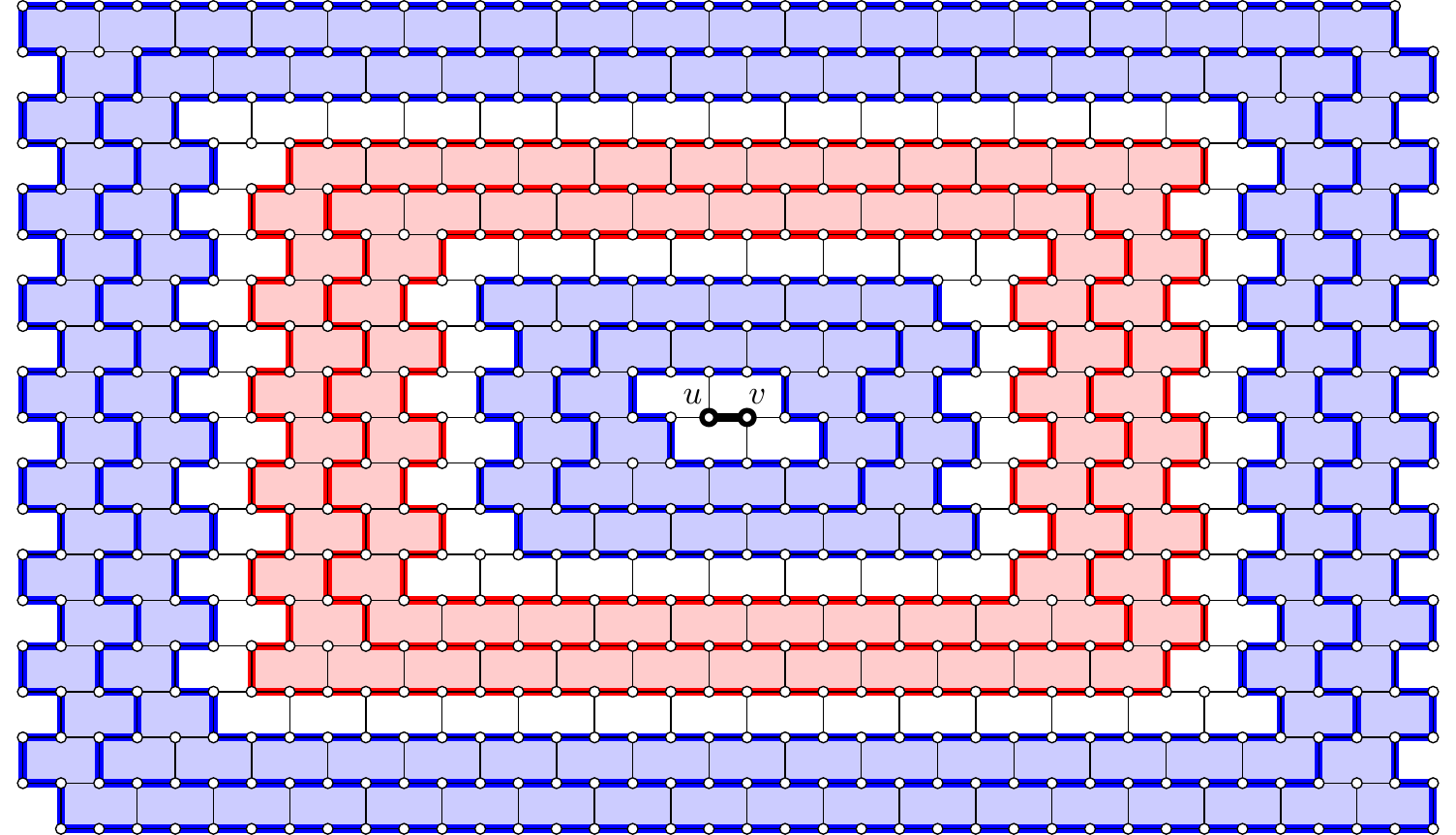}
\caption{The wall $W_1'$ of height $12k+6=18$ in case $k=1$. The black bold edge $uv$ is separated from the vertices that lie outside 
$W_1'$ by $2k+1=3$ nested triple layers. The three shaded areas indicate the sets $Y_1$, $Y_2$, and $Y_3$.}
\label{f-triplelayer}
\end{figure}

Let $G'$ be the graph obtained from $G$ after contracting $uv$. The following claim shows that $uv$ is an ``irrelevant'' edge, i.e., that $uv$ may be contracted without loss of generality.

\medskip
\noindent
{\it Claim 2. $(G,k)$ is a yes-instance if and only if $(G',k)$ is a yes-instance.}

\medskip
\noindent
We prove Claim 2 as follows. First suppose that $(G,k)$ is a yes-instance. This means that $G$ can be modified into a planar graph $F$ by at most $k$ edge contractions. Let $E'\subseteq E(G)$ be a set of at most $k$ edges whose contraction modifies $G$ into $F$. Observe that we can contract the edges in $E'$ in any order to obtain $F$ from $G$. If $uv\in E'$, then we can first contract $uv$ to obtain the graph $G'$, and then contract the other edges in $E'$ to modify $G'$ into the planar graph $F$. If $uv\notin  E'$, then we first contract the edges in $E'$ to modify $G$ into $F$, and then contract the edge $uv$. This leads to a graph $F'$. Since planar graphs are closed under edge contractions, $F'$ is planar. Moreover, $F'$ can also be obtained from $G'$ by contracting the edges in $E'$. We conclude that $(G',k)$ is a yes-instance.

Now suppose that $(G',k)$ is a yes-instance. This means that $G'$ can be modified into a planar graph $F'$ by at most $k$ edge contractions.  Let $E'\subseteq E(G')$ be a set of at most $k$ edges whose contraction modifies $G'$ into 
$F'$. Let $F$ be the graph obtained from $G$ by contracting all the edges of $E'$. 
We will show that $F$ is planar as well.

Recall that $S_1=\emptyset$, and that we defined $2k+1$ triple layers $L_1,\ldots,L_{2k+1}$ inside $P(W_1')$. Let $Q_i$, $Q_i'$, and $Q_i''$ denote the three perimeters in $H$ that form the triple layer $L_i$ for $i=1,\ldots,2k+1$, where $Q_i$ is the outer perimeter, $Q_i'$ the middle 
perimeter, and $Q_i''$ the 
inner perimeter. Let $Y_i$ be the set of all vertices of $H$ that are in $Q_i\cup Q_i'\cup Q_i''$ or that lie in the intersection of the inner 
region of $Q_i$ and the outer region of $Q_i''$, i.e., $Y_i$ is the set of vertices that lie on or ``in between'' the perimeters $Q_i$ and $Q_i''$ in $H$; see Figure~\ref{f-triplelayer} for an illustration.
Because we applied at most $k$ edge contractions in $G'$, there exists a set $Y_i$, for some 
$1\leq i\leq 2k+1$, such that none of its vertices is incident with an edge in $E'$. 
This means that $L_i$ is a triple layer in $F'$ as well.  
We consider a planar embedding of $F'$, in which $Q_i''$ is in the inner 
region of $Q_i'$, and $Q_i'$ is in the inner region of $Q_i$; for convenience, we will denote this planar embedding by $F'$ as well. 

We will now explain how to apply Lemma~\ref{l-combined}. 
We define $C_1$ and $C_2$ to be the perimeters $Q_i'$ and $Q_i''$, respectively. 
We define $B$ as the subgraph of $F'$ 
induced by the vertices that either are in $Q_i'\cup Q_i''$ or lie between $Q_i'$ and $Q_i''$ in $F'$.
Here, we assume that $B$ is connected, as we can always place connected components of $B$ that do not contain vertices
from $Q_i'\cup Q_i''$ outside $Q_i'$.
Because $Q_i'$ and $Q_i''$ are perimeters of subwalls in $H$, and $Q_i''$ is contained inside $Q_i'$, 
there exist at least two vertex-disjoint paths $P_1,P_2$ in $H$ joining $Q_i'$ and $Q_i''$ using vertices of $Y_i$ only. 
Because none of the vertices in $Y_i$ is incident with an edge in $E'$, the two paths $P_1,P_2$ are also vertex-disjoint in $F'$, and consequently in $B$.

We now construct the graph $I$. 
Because $F'$ is a contraction of $G'$, and $G'$ is a contraction of $G$, we find that $F'$ is a contraction of $G$.
Let ${\cal W}$ be an $F'$-witness structure  
of $G$ corresponding to contracting exactly the edges of $E'\cup \{uv\}$ in $G$.   
Then we define $I$ to  be the subgraph of $G$ induced by the union of the vertices of all the witness sets $W(x)$ with
$x$ on or inside $Q_i''$ in $F'$. 
Just as we may assume that $B$ is connected, we may also assume that the subgraph of $F'$ induced by the vertices
that lie on or inside $Q_i''$ is connected. Because witness sets are connected by definition, we then  find that  
$I$ is connected.

Because the edge $uv$ is contracted when $G$ is transformed into $F'$, $u$ and $v$ belong to the same
witness set of ${\cal W}$. 
Let $x^*$ be the vertex of $F'$, such that $u$ and $v$ are in the witness set $W(x^*)$.
Recall that all the vertices of $Q_i''$ and the vertices $u$ and $v$ belong to the wall $W_1'$. 
Since $u$ and $v$ lie inside $Q_i''$ in $H$ and walls have a unique plane embedding, $x^*$ lies 
inside $Q_i''$ in $F'$.
Hence, $u$ and $v$ are vertices of $I$.
Also recall that none of the vertices of $Y_i$, and none of the vertices of $Q_i''$ in particular,  
is incident with an edge of $E'$.
Hence, the vertices of $Q_i''$ correspond to witness sets of ${\cal W}$ that are singletons, i.e., that have cardinality 1. 
This means that we can identify each vertex of $Q_i''$ in $F'$ with the unique vertex of $G$ in the corresponding witness set. Hence,  we obtain that $B\cap I=Q_i''=C_2$.

We now prove that $R=B\cup I$ is planar. For doing this, we first prove that 
$B$ contains no vertex $x$ with $W(x)\cap S\neq\emptyset$, and that $I$ contains no vertex from $S$. 
To see that $B$ contains no vertex $x$ with $W(x)\cap S\neq \emptyset$, assume 
that $x$ is a vertex of $B$ and $s$ is a vertex of $S$ with $s\in W(x)$. 
Recall that no vertex from $Q_i'$ is incident with an edge in $E'$.  Hence, we can identify each vertex in $Q_i'$ in $F'$
with the unique vertex of the corresponding witness set, just as we did earlier with the vertices of $Q_i''$.
Because $s\in W(x)$, this means that $x$ is not in $Q_i'$. 
Because $B$ is connected, we find that
$F'$ contains a path from $x$ to a vertex $y$ in $Q_i'$ that contains no vertex from $Q_i$. 
Note that since $y$ is in $Q_i'$, $y$ is a vertex in $G$ as well.
Because $W(x)$ induces a connected subgraph of $G$ by definition, 
 this path can be transformed into a path in $G$ from $s$ to $y$ that does not contain a vertex from $Q_i$. This is not possible, because $S_1=\emptyset$ implies that every path in $G$ from $s$ to $y$ must go through $Q_i$.
 
We now show that $I$ contains no vertex from $S$. In order to obtain a contradiction, assume that $I$ contains a vertex $s\in S$. Because $I$ is connected, this means that $G$ contains a path from $s$ to a vertex in $Q_i''$ that contains no vertex from $Q_i$ (and also no vertex from $Q_i'$). This is not possible, because $S_1=\emptyset$.

Let $R'$ be the subgraph of $G$ induced by the vertices in the sets
$W(x)$ with $x\in V(B)$ and the vertices of $I$.
Since we proved that $R'$ contains no vertices from $S$, $R'$ is a subgraph of $H$.
Consequently, $R'$ is planar because $H$ is planar. As a result, $R$ is planar because $R$ can be obtained from $R'$ by contracting all edges in every set $W(x)$ with $x\in V(B)$, and planar graphs are closed under edge contractions.
As we have shown that $R=B\cup I$ is planar, we are now ready to apply 
Lemma~\ref{l-combined}. 
This lemma tells us that $R$ has an embedding ${\cal R}$, such that $Q_i'=C_1$ is the boundary of the outer face.
Now consider the embedding that we obtain from the (plane) graph $F'$ by removing all vertices that lie inside $Q_i'$. We combine this embedding with ${\cal R}$ to obtain a plane embedding 
of a graph $F^*$.  
We can obtain $F$ from $F^*$ by contracting all edges in $E'$ that
are incident to a vertex in $I$; recall that $u$ and $v$ are both in $I$ and that $uv$ is not an edge of $E'$.
Because planar graphs are closed under  edge contractions, this means that $F$ is planar.
This completes the proof of Claim~2.

\medskip
\noindent
We can find the irrelevant edge $uv$ mentioned just above Claim~2 in $O(n)$ time. Since all other steps took $O(n^{1+\epsilon})$ time, we used $O(n^{1+\epsilon})$ time so far. After finding the edge $uv$, we contract it and continue with the smaller graph $G'$. Because removing $S$ will make $G'$ planar as well, we can keep $S$ instead of applying Theorem~\ref{t-vertex} again. Hence, we apply Theorem~\ref{t-vertex} only once. Because $G$ has $n$ vertices, and 
every iteration reduces the number of vertices by exactly one,
the total running time of our algorithm is $O(n^{2+\epsilon})$. This completes the proof.
\qed
\end{proof}

\section{Future Work}\label{s-future}

A natural direction for future work is to consider the class ${\cal H}$ that consists of all $H$-minor free graphs for some graph $H$ and
to determine the parameterized complexity of ${\cal H}$-{\sc Contraction} for such graph classes.
Our proof techniques rely on the fact that we must contract to a planar graph, and as such they cannot be used directly for this variant. Hence, we pose this problem as an open problem.

\end{document}